\newtheorem{theorem}{Theorem}
\begin{document}
%
\title{Carry Value Transformation (CVT) - Exclusive OR (XOR) Tree and Its Significant Properties}
%

%

\author{Jayanta~Kumar~Das,
        Pabitra~Pal~Choudhury
        and~Sudhakar~Sahoo
\thanks{J. K. Das and P. Pal Choudhury are in Applied Statistic Unit, Indian Statistical Institute, India,
203, B.T Road Kolkata-700108, E-mail:dasjayantakumar89@gmail.com}
\thanks{S. Sahoo with Institute of Mathematics and applications, India, Orissa, Bhubaneswar-751003}
}

\maketitle

\begin{abstract}
CVT and XOR are two binary operations together used to calculate the sum of two non-negative integers on using a recursive mechanism. In this present study the convergence behaviors of this recursive mechanism has been captured through a tree like structure named as CVT-XOR Tree. We have analyzed how to identify the parent nodes, leaf nodes and internal nodes in the CVT-XOR Tree. We also provide the parent information, depth information and the number of children of a node in different CVT-XOR Trees on defining three different matrices. Lastly, one observation is made towards very old Mathematical problem of Goldbach Conjecture. 

\end{abstract}

\begin{IEEEkeywords}
Carry Value Transformation (CVT), Exclusive OR (XOR), CVT-XOR Tree, Fractal, Goldbach Conjecture.
\end{IEEEkeywords}

%
\IEEEpeerreviewmaketitle

\section{Introduction}
%
%
%
%
\IEEEPARstart{C}{arry}  Value Transformation (CVT) is one of the first and foremost transformation in the field of manipulating the strings of bits and other transformations of similar nature such as Extreme Value Transformation (EVT) [2], 2- Variable Boolean Operation (2-VBO) [4], Integral Value Transformations (IVTs) [5] and so on came after that. All these transformations have lots of applications in the field like pattern formations [2, 4], solving Round Rabin Tournament problem [6], Collatz-like functions [5] and so forth. In [7] with the help of Cellular Automata, CVT has been used for efficient hardware design of some basic arithmetic operations. As CVT is one of the important transformations in this area of study so further properties of CVT should have been thoroughly developed for some of its other future scope.  

In the field of data structure for the organization of non-linear data we have seen many  Tree structures like Binary Tree, AVL Tree, B Tree, B+ Tree etc. and many of their applications. In this paper, we have designed a new Tree structure named as CVT-XOR Tree in the domain of CVT and XOR. For this Tree two fundamental logics of CVT and XOR from [3] are used: 
(1) In [3] it is proved that addition of any two non-negative integers expressed as binary numbers is same as addition of their CVT and their XOR values. This result is also shown to be true for any base of the number system. 
(2) It has also been proved that in a successive addition of CVT and XOR of any two non-negative integers, the maximum number of iterations required to get either CVT=0 or XOR=0 is equal to the length of the bigger integer expressed as a binary string.

Organization of this paper is as follows: In section 2 some preliminaries of CVT and XOR along with a recursive algorithm for addition of two numbers are discussed. Section 3 deals with the fundamentals of CVT-XOR Tree and two approaches for the construction of CVT-XOR Tree. First part in section 4 discusses the different properties of CVT-XOR Tree and in second part some other significant properties of CVT-XOR Tree from the P, D and F matrices are enumerated. Section 5 deals with the conclusion of this paper along with some future research direction.
\section{Review of Earlier works on CVT}
CVT and XOR are two Transformations defined on a pair of non-negative integers expressed in binary. Interested reader can refer [1, 2, 3] for the definition of CVT which we are omitting for shortening the paper size. Some of the properties which we will be using in this paper are enumerated as follows:
\begin{description}
  \item[a)] CVT is always an even number and length is (n+1) bits.
\item[b)] It has been proved in [] that addition of two non-negative integers X and Y is equal to the sum of their CVT and XOR values i.e. X+Y=CVT(X, Y) + XOR(X, Y) and also the recurrence scheme always converges to (0, X+Y) in at most (n+1) iterations where n is the maximum number of bits required to represent the bigger number.
\end{description}
\textbf{\textit{Algorithm 1:}}
Recursive Algorithm for Addition of Two Non-Negative Integers:

$ \begin{array}{lc}
 ADD(X,Y) \\
 \{  \\
 if(X==0) \\
 return$  $Y \\
 elseif(Y==0)  \\
 return$ $ $ X $ \\
 else \\
 return$  $ADD(CVT(X,Y),$ $XOR(X,Y))\\
 
 \}
 \end{array}
$
 
As at the time of execution every recursive algorithm forms an Activation record in a Tree form so we are motivated to construct the Tree for the above recurrence algorithm named as CVT-XOR Tree and analyzed some of its significant features.
\section{Fundamentals of CVT-XOR Tree}
If we start from a pair of numbers as root node of the form (0, N) and back tracking from it using the iterative process X+Y=CVT(X, Y)+XOR(X,Y) [] we come across many intermediate nodes of the form (CVT, XOR) pair then we will get an m-ary tree. For a given number N, each node in the tree is represented by a pair or co-ordinate of the form (X, Y) such that X+Y=N. First co-ordinate i.e. X is the resultant CVT of two numbers and second co-ordinate i.e. Y is the resultant XOR of two numbers whose sum is N. As (CVT (0, N), XOR (0, N)) = (0, N) so  and this forms a self-loop at the root node. Thus a CVT-XOR Tree is like an m-ary tree with a loop only at the root node along with several non-leaf and leaf nodes. Depending of N value we have two types of Tree. If N is even, then the tree is called Even CVT-XOR Tree and If N is odd, then the tree is called Odd CVT-XOR Tree.

\textbf{\textit{Illustration 1:}} Let an even integer N=40 then (0, 40) is the root node and other leaf and non-leaf nodes in the form (X, Y) (such that X+Y=40) are shown in Fig 1. As 20 is an even number, so pair (X, Y) is in the form of either (even, even) or (odd, odd).
\begin{figure}[h]
\begin{center}
\includegraphics[scale=.10]{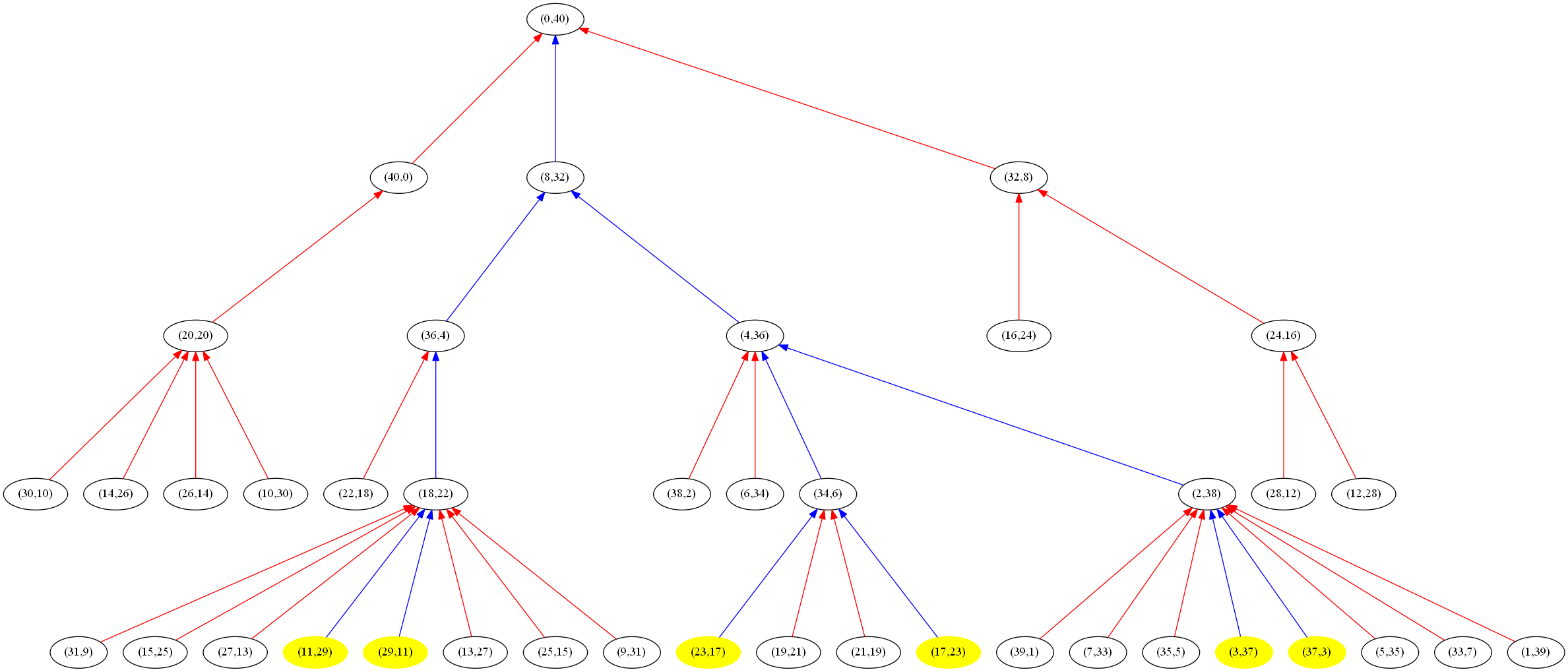}
\end{center}
\caption{CVT-XOR Tree for an even integer 40}
\label{fig:13}
\end{figure}

\textbf{\textit{Illustration 2:}} Let an even integer N=25 then (0, 25) is the root node and other leaf and non-leaf nodes in the form (X, Y) (such that X+Y=25) are shown in Fig 2. As 25 is an odd number, so pair (X, Y) is in the form of either (even, odd) or (odd, even).
\begin{figure}[h]
\begin{center}
\includegraphics[scale=.12]{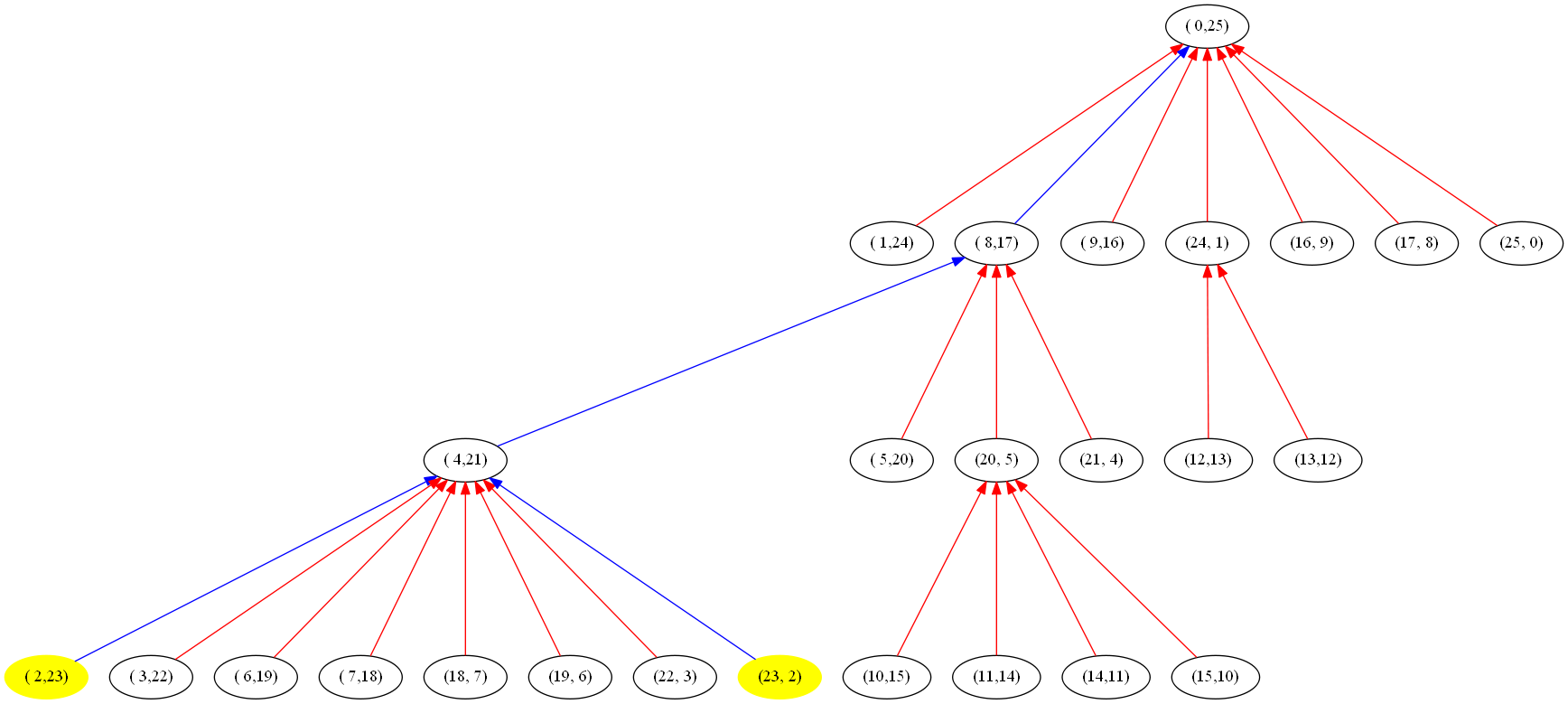}
\end{center}
\caption{CVT-XOR for an odd integer 25}
\label{fig:13}
\end{figure}
Using section 2.(b) the maximum height of the tree is  and is the number of significant bits required to represent N. Total number of nodes in CVT-XOR tree is N+1 which is discussed in section 4. Depending on the values of N either even or odd two types of trees can be analyzed named as Even CVT-XOR Tree or Odd CVT-XOR Tree.
As CVT of two numbers is always an even number so the first co-ordinate in the node is always even for all internal nodes. Therefore for these pairs, second co-ordinate should be an even number when N is even and is an odd number when N is odd. But when N is even the leaf nodes can be any pair of the form either (even, even) or (odd, odd) otherwise it is of the form (even, odd) or (odd, even). Following algorithm 2 shows whether a given node (X, Y) is leaf node or not.

\textbf{\textit{Algorithm 2:}}
Algorithm to Check whether a given node is leaf or not

$ \begin{array}{lc}
 NODE\_TYPE(X,Y) \\
 \{  \\
 if(X==0) \\
 $Print \textquotedblleft (X, Y) is root node\textquotedblright $ \\
 elseif(X==odd)  \\
 $Print \textquotedblleft (X, Y) is a leaf  node in X+Y CVTY-XOR Tree \textquotedblright $  \\
 else \\
 $Print \textquotedblleft (X, Y) may be a leaf node or an internal node$\\$further checking is required \textquotedblright  $\\
 
 \}
 \end{array}
$\\

Finding the structure of the tree for a given number N two different approaches are discussed below:
\subsection{Top down Approach}
The root is always of the form (0, N) this is consider in level-0 and using iterative process to find the nodes in level-1 we will search for the pair of integers X and Y such that X+Y=N, CVT (X, Y)=0 and XOR(X, Y)=N except node (0, N). All the nodes in this process are kept in level-1. This process is continuing until all the leaf nodes are found i.e. the nodes with no predecessor.

\subsection{Bottom up Approach} 
This process starts from all the leaf nodes whose sum is N. All the Leaf nodes for the number N can be found easily. For each such pairs we apply CVT and XOR operation and keep those nodes such that their pair sum is equal to N. Many nodes may converge towards the single node after CVT and XOR operations. This process is continued until we reach to the root node from all the leaf nodes which are in the form of (0, N).

In the CVT-XOR Tree if we backtrack from the root node to the leaf node we find different paths to reach to the leaf nodes at different depths. It seems that parent child relationship in the CVT-XOR Tree totally depends on characterization or bit patterns of the node pairs in the path. To understand these dynamics we have analyzed some of the properties of CVT-XOR Tree discussed in the following section.

\section{Significant Properties of CVT-XOR Trees}
\begin{theorem}
The Numbers of Nodes or Vertices in the CVT-XOR Tree is N+1.
\end{theorem}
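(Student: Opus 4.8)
The plan is to show that the set of tree nodes coincides \emph{exactly} with the set $S$ of all coordinate pairs $(X,Y)$ of non-negative integers satisfying $X+Y=N$, and then to count $S$. Counting $S$ is immediate: for each choice $X\in\{0,1,\dots,N\}$ the value $Y=N-X$ is forced, giving $|S|=N+1$. The whole theorem therefore reduces to proving two inclusions — that every pair in $S$ is a node of the tree, and that no pair is ever repeated as a node.

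The organizing idea is the forward ``parent'' map $f(X,Y)=(\mathrm{CVT}(X,Y),\mathrm{XOR}(X,Y))$. The key structural fact, borrowed from property (b) of Section~2, is that $\mathrm{CVT}(X,Y)+\mathrm{XOR}(X,Y)=X+Y$, so $f$ maps $S$ into $S$: the coordinate sum is invariant under one step of the recursion. This invariance is what guarantees that the entire dynamics stays inside the finite set $S$, and in particular that every node produced by the construction lies in $S$, settling the inclusion ``nodes $\subseteq S$''.

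For the reverse inclusion I would invoke the convergence half of property (b): starting from an arbitrary $(X,Y)\in S$, repeated application of $f$ reaches $(0,N)$ in at most $n+1$ steps, where $n$ is the bit length of $N$. Since $(0,N)$ is precisely the root and each application of $f$ sends a node to its parent, this exhibits a finite chain of parent edges joining any $(X,Y)$ to the root; hence every element of $S$ appears in the tree. Because $\mathrm{CVT}$ and $\mathrm{XOR}$ are deterministic functions, each pair has exactly one image $f(X,Y)$, so the functional graph of $f$ on $S$ has out-degree one at every vertex and a single fixed point $(0,N)$ (any second fixed point would stay fixed under iteration and so contradict convergence to the root). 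A finite functional graph of this shape is an in-tree rooted at $(0,N)$, so no pair is visited twice. Combining the two inclusions yields node set $=S$, whence the number of nodes is $N+1$.

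The main obstacle is the reverse inclusion — certifying that backward reachability from the root actually sweeps out all of $S$ rather than a proper subset. This is exactly where property (b) does the real work: without the guarantee that every sum-$N$ pair flows to $(0,N)$, the tree could in principle miss some pairs or split into a forest, and the clean count $N+1$ would fail. Once convergence to the unique fixed point is established, the remaining argument is purely combinatorial.
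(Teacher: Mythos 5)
Your proposal is correct and follows essentially the same route as the paper: count the $N+1$ ordered pairs $(X,Y)$ with $X+Y=N$, and use property (b) of Section~2 (sum invariance plus guaranteed convergence to $(0,N)$) to conclude that all of them are connected to the root and hence constitute exactly the node set of the tree. Your version merely makes explicit the two inclusions and the in-tree structure of the functional graph of $f(X,Y)=(\mathrm{CVT}(X,Y),\mathrm{XOR}(X,Y))$, which the paper leaves implicit.
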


\begin{proof}
Number of non-negative pair wise integer partitions of N is N+1 of the form (0, N), (1, N-1) … (N, 0). From section 2(b) all these pairs must converges to (0, N) so all these N+1 number of nodes are connected and will form a single CVT-XOR Tree. Hence the total numbers of Nodes in the CVT-XOR Tree is N+1.
\end{proof}
\textbf{\textit{Illustration 3:}}
Let an even integer N=18, we have 19 pairs as (0, 18), (18, 0), (1, 17), (2, 16), (3, 15), (4, 14), (5, 13), (6, 12), (7, 11), (8, 10), (9, 9), (10, 8), (11, 7), (12, 6), (13, 5), (14, 4), (15, 3), (16, 2), (17, 1) in CVT-XOR Tree shown in Fig 3.
\begin{figure}[h]
\begin{center}
\includegraphics[scale=.12]{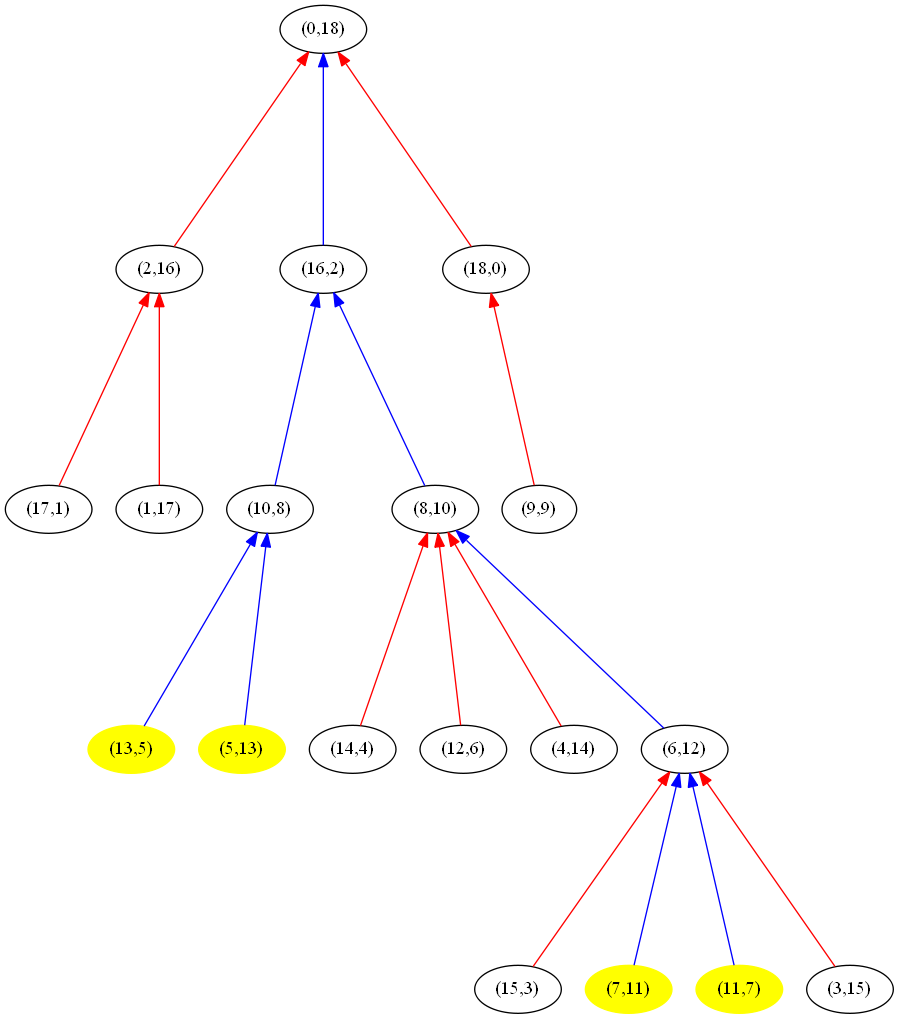}
\end{center}
\caption{CVT-XOR for an even integer 18}
\label{fig:18}
\end{figure}

\begin{theorem}
In CVT-XOR Tree if the pair (A, B) is in depth-d, then its symmetric pair (B, A) is also in depth-d except the root node.
\end{theorem}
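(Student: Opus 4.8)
The plan is to reduce the statement to the single structural fact that the map sending a node to its parent is symmetric under swapping the two coordinates. Recall from the construction that a non-root node $(X,Y)$ with $X+Y=N$ has parent $(\mathrm{CVT}(X,Y),\mathrm{XOR}(X,Y))$, and that the depth of a node is the number of applications of the rule $(X,Y)\mapsto(\mathrm{CVT}(X,Y),\mathrm{XOR}(X,Y))$ required to reach the root $(0,N)$. First I would verify that both coordinate-producing operations are commutative: $\mathrm{XOR}(A,B)=\mathrm{XOR}(B,A)$ is immediate, while $\mathrm{CVT}(A,B)=\mathrm{CVT}(B,A)$ follows from the definition of the carry transformation, since the carry generated at each bit position is symmetric in the two operands (equivalently $\mathrm{CVT}(A,B)$ is twice the bitwise AND of $A$ and $B$, which is manifestly symmetric and even, consistent with property (a)).

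Given this, the key step is immediate: the parent of $(A,B)$ equals $(\mathrm{CVT}(A,B),\mathrm{XOR}(A,B))=(\mathrm{CVT}(B,A),\mathrm{XOR}(B,A))$, which is exactly the parent of $(B,A)$. Hence, whenever neither $(A,B)$ nor $(B,A)$ is the root, the two pairs are children of one and the same node, i.e. they are siblings, and therefore sit at the same depth $d=1+(\text{depth of their common parent})$. If $A=B$, which can occur only for even $N$ at the pair $(N/2,N/2)$, the claim is trivial. I would present this either directly through the sibling relationship or, for a more formal phrasing, by induction on $d$ using $\mathrm{depth}(A,B)=1+\mathrm{depth}(\mathrm{CVT}(A,B),\mathrm{XOR}(A,B))$.

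Finally I would account for the exception named in the statement. The recursion $\mathrm{depth}(X,Y)=1+\mathrm{depth}(\text{parent})$ is valid only for non-root nodes; the root $(0,N)$ is the base case at depth $0$ and carries a self-loop rather than a genuine parent. Its mirror image $(N,0)$ satisfies $\mathrm{CVT}(N,0)=0$ and $\mathrm{XOR}(N,0)=N$, so $(N,0)$ is a child of the root and therefore sits at depth $1$. Thus the unique mirrored pair not sharing a common depth is exactly $\{(0,N),(N,0)\}$, which is precisely why the root must be excluded; for every other node the sibling argument applies verbatim.

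I do not anticipate a serious obstacle, as the entire content is the commutativity of $\mathrm{CVT}$ and $\mathrm{XOR}$. The only point demanding care is making the root exception precise, namely confirming that the sole mirrored pair split across two depths is the root together with its child $(N,0)$, so that the ``except the root node'' clause is genuinely justified rather than merely asserted.
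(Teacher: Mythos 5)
Your proposal is correct and takes essentially the same route as the paper: both arguments rest on the commutativity of $\mathrm{CVT}$ and $\mathrm{XOR}$, so that $(A,B)$ and $(B,A)$ share a common parent and hence a common depth, with the pair $\{(0,N),(N,0)\}$ handled separately as the sole exception via $(N,0)\rightarrow(0,N)$. Your write-up merely adds some refinements the paper omits (an explicit reason why $\mathrm{CVT}$ is symmetric, the degenerate case $A=B$, and the optional induction on $d$).
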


\begin{proof}
Let,CVT(A, B)=P and XOR(A, B)=Q; where P and Q are the parent node (P, Q) for one child pair (A, B).\\
We know that, CVT(A, B)=CVT(B, A) and XOR(A, B)=XOR(B, A)\\
Then CVT(A, B)=CVT(B, A)=P and XOR(A, B)=XOR(B, A)=Q\\
That implies both the pairs (A, B) and (B, A) are the child nodes whose parent node is (P, Q) i.e. nodes (A, B) and (B, A) are two predecessor.\\
Therefore, (A, B)$\rightarrow$ (P, Q) and (B, A)$\rightarrow (P, Q)$. 
So they must belong to the same level i.e. in same depth. \\ 
According to the CVT properties, if the XOR value is zero in $i^{th}$ iteration, then the CVT value is zero in $i+1^{th}$ iteration. Therefore (N, 0) $\rightarrow$ (0, N) and (0, N) is the root node in CVT-XOR tree paradigm, So if (0, N) is in level-0 then the node (N, 0) will be always in level-1 in the CVT-XOR tree.
\end{proof}

\textbf{\textit{Illustration 4:}}
From Fig. 3, the pairs (2, 16) and (16, 2) are in lavel-1 similarly pairs (8, 10) and (10, 8) are in level-2 and so on. But the root node (0, 18) and its symmetric pair (18, 0) are in level-0 and level-1 respectively.

\begin{theorem}
For any pair (P, Q) where $P=a_{n+1} a_{n} a_{n-1}...a_{1}$ is an even number and $Q=b_{n+1} b_{n} b_{n-1}...b_{1}$ is either even or odd and if   $\exists$i; $1\leq i\leq n$ such that $a_{i+1}=b_{i}=1$, then the pair (P, Q) is the leaf node or contradictory (even, even) pair.
\end{theorem}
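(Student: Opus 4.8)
The plan is to reduce the whole statement to a single impossibility at the bit level. First I would make the parent–child step explicit: for any pair $(U,V)$ the child is sent to its parent by $(U,V)\mapsto(\mathrm{CVT}(U,V),\mathrm{XOR}(U,V))$, where $\mathrm{XOR}(U,V)=U\oplus V$ is the bitwise exclusive or and $\mathrm{CVT}(U,V)=2\,(U\wedge V)$ is twice the bitwise and (this is exactly what makes the CVT even and of length $n+1$, as recorded in property (a)). Consequently $(P,Q)$ is an \emph{internal} node precisely when it admits a preimage, i.e. when some pair $(U,V)$ satisfies $2\,(U\wedge V)=P$ and $U\oplus V=Q$; if no such $(U,V)$ exists, then $(P,Q)$ has no child and is a leaf.

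Next I would write this preimage condition bitwise. Denoting the bits of $U,V$ by $u_i,v_i$, the carry generated at position $i$ occupies position $i+1$ of the CVT, so $a_{i+1}=u_i\wedge v_i$, while the sum bit gives $b_i=u_i\oplus v_i$, for $1\le i\le n$. The local fact that drives everything is that the ordered pair $(u\wedge v,\;u\oplus v)$ ranges only over $(0,0),(0,1),(1,0)$ as $(u,v)$ runs over $\{0,1\}^2$, and the value $(1,1)$ is unattainable: $u\wedge v=1$ forces $u=v=1$ and hence $u\oplus v=0$, whereas $u\oplus v=1$ forces $u\neq v$ and hence $u\wedge v=0$.

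With this the theorem is immediate. If there is an index $i$ with $a_{i+1}=b_i=1$, then realizing $(P,Q)$ as $(\mathrm{CVT},\mathrm{XOR})$ would require $(u_i\wedge v_i,\;u_i\oplus v_i)=(1,1)$ at position $i$, which is impossible; hence no preimage exists and $(P,Q)$ has no children. I would then split on the parity of $Q$ to match the wording. When $Q$ is odd the pair is $(\text{even},\text{odd})$, sitting in an Odd CVT-XOR Tree, and the conclusion is simply that it is a leaf. When $Q$ is even the pair is $(\text{even},\text{even})$: since a node with even first coordinate is exactly the one Algorithm~2 leaves undecided ("may be internal"), the overlap $a_{i+1}=b_i=1$ is precisely the obstruction that rules internal status out, so this is the contradictory $(\text{even},\text{even})$ case flagged in the statement.

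The main obstacle I anticipate is the index bookkeeping between $P$ and $Q$: because the CVT carries into position $i+1$ while the XOR stays in position $i$, the relevant clash is between bit $a_{i+1}$ of $P$ and bit $b_i$ of $Q$ rather than between $a_i$ and $b_i$, and one must also handle the boundary bits (the top carry $a_{n+1}=u_n\wedge v_n$, which is harmless on its own, and the forced value $b_{n+1}=0$, which is only needed for the converse direction). Getting these alignment indices exactly right, and articulating cleanly why the even–even subcase deserves the label "contradictory" rather than being a generic leaf, is where the care is required; the arithmetic itself is a one–line case check on $\{0,1\}^2$.
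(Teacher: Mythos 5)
Your proposal is correct and follows essentially the same route as the paper: both arguments reduce the claim to the impossibility of a single bit position satisfying $u_i \wedge v_i = a_{i+1} = 1$ and $u_i \oplus v_i = b_i = 1$ simultaneously, hence no preimage exists and the node is a leaf. Your write-up merely spells out the preimage condition, the index alignment between the carry and sum bits, and the parity split more explicitly than the paper's one-line version.
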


\begin{proof}
The theorem statement demands two binary variable $x_{i}$  and $y_{i}$  such that $AND(x_{i}, y_{i})=a_{i+1}=1$ and $XOR(x_{i}, y_{i})=b_{i}=1$ for some i, which is impossible. So the such pair (P, Q) must be a (even, even) leaf node.
\end{proof}
\textbf{\textit{Illustration 5:}}
As per the definition of CVT, we know that CVT of any two numbers is always an even number. Therefor there is no predecessor for an odd pair. But some (even, even) contradictory pairs are also in leaf node. Fig. 4 shows that pair (3, 5) and (5,3) are (odd, odd) leaf nodes and (6, 2) is a leaf node although the first co-ordinate is even because as third bit and second bit of 6 (110) and 2 (10) from LSB position are 1 and 1 respectively. So the pair (6, 2) is a contradictory leaf node.

\begin{figure}[h]
\begin{center}
\includegraphics[scale=.12]{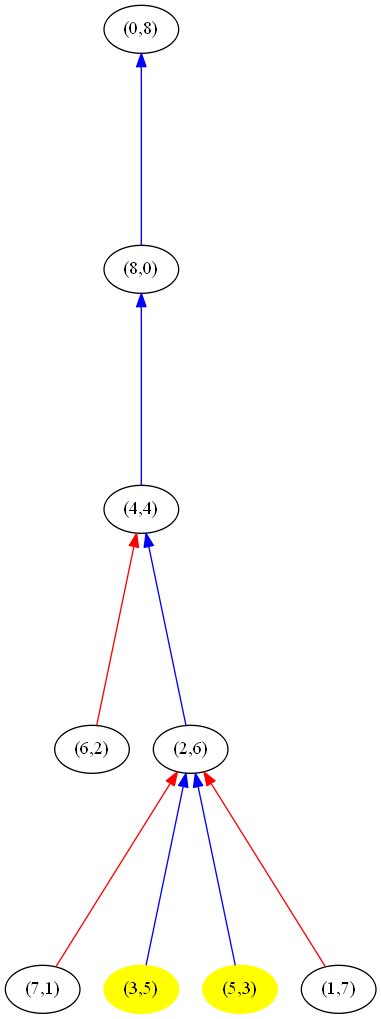}
\end{center}
\caption{CVT-XOR for an even integer 8}
\label{fig:8}
\end{figure}

\begin{theorem}
If the pair (P, Q) where the CVT value $P=a_{n+1} a_{n} a_{n-1}...a_{1}$  and the XOR value $Q=b_{n+1} b_{n} b_{n-1}...b_{1}$  is a non-leaf node in the CVT-XOR Tree and if $a_{i+1}=0$ and $b_{i}=1$ for m such I positions where $1\leq i\leq n+1$ and $m\leq n$; then the number of successors of the (P, Q) node=$2^{m}$. These nodes can be generated by substituting m column positions either ${0 \choose 1}$ or ${1 \choose 0}$ and keeping other (n-m) column values fixed.
\end{theorem}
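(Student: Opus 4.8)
The plan is to count the successors (children) of the node $(P,Q)$ directly, by reconstructing bit position by bit position every pair $(x,y)$ with $\mathrm{CVT}(x,y)=P$ and $\mathrm{XOR}(x,y)=Q$. The essential observation is that both operations are \emph{local}: the carry generated at position $i$ lands only at position $i+1$ and nothing propagates further, so $a_{i+1}=x_i\wedge y_i$ and $b_i=x_i\oplus y_i$ depend solely on the single bit pair $(x_i,y_i)$. Consequently the reconstruction decouples completely across columns, and the total number of admissible pairs is the product over all positions of the number of local solutions $(x_i,y_i)$.

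First I would solve the two-bit system $x_i\wedge y_i=a_{i+1}$, $x_i\oplus y_i=b_i$ for each of the four possible right-hand sides. When $(a_{i+1},b_i)=(0,0)$ the only solution is $(x_i,y_i)=(0,0)$; when $(a_{i+1},b_i)=(1,0)$ the only solution is $(1,1)$; when $(a_{i+1},b_i)=(0,1)$ there are exactly two solutions, $(0,1)$ and $(1,0)$; and when $(a_{i+1},b_i)=(1,1)$ there is no solution, which is precisely the contradictory case already isolated in Theorem 3. Thus a column contributes a factor of $2$ exactly when $a_{i+1}=0$ and $b_i=1$, and a factor of $1$ in the two remaining consistent cases.

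Next, since $(P,Q)$ is assumed to be a non-leaf node, Theorem 3 guarantees that no column has $a_{i+1}=b_i=1$; hence every column admits at least one local solution and the product is nonzero. Multiplying the local counts, the total number of successors equals $2^{m}$, where $m$ is the number of columns with $a_{i+1}=0$ and $b_i=1$. Moreover each such column is free to take either ${0 \choose 1}$ or ${1 \choose 0}$ independently, while the remaining $(n-m)$ consistent columns are forced, which reproduces verbatim the construction claimed in the statement.

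The only delicate point is the bookkeeping at the extreme bit positions: one must verify that the index ranges line up, namely that the LSB $a_1$ of a CVT value is always $0$ and the high-order bit $b_{n+1}$ of the XOR value contributes no genuine choice, so that the free columns are counted among exactly $n$ positions and the bound $m\le n$ is respected. I expect this boundary accounting, rather than the counting argument itself, to be the main obstacle; establishing the decoupling across columns is the conceptual heart, but it follows immediately from the local, carry-free nature of CVT and XOR.
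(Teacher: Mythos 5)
Your argument is correct and is essentially the paper's own proof: the paper likewise counts $2^m$ by letting each of the $m$ columns with $a_{i+1}=0$, $b_i=1$ be filled by ${0 \choose 1}$ or ${1 \choose 0}$ and invoking the fundamental principle of counting. You simply make explicit what the paper leaves implicit — the column-wise decoupling of $a_{i+1}=x_i\wedge y_i$ and $b_i=x_i\oplus y_i$, the forced solutions in the other consistent cases, and the role of Theorem 3 in ruling out the contradictory column — which strengthens rather than changes the argument.
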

\begin{proof}
As m column values has to be changed by ${0 \choose 1}$ or ${1 \choose 0}$ in 2 different ways, so using fundamental principle of counting the total successor pairs will be $2^{m}$.
\end{proof}
\textbf{\textit{Illustration 6:}}
Let an example of (even, even) non-leaf node (2, 6) shown below: \\
$ \begin{array}{ccccc}
& 4 & 3 & 2 & 1   \\
P=2=CVT & 0 & 0 & 1 & 0 \\
\hline  
& a= & * & * & 1   \\
& b= & ** & ** & 1  \\
\hline
Q=6=XOR &   & 1 & 1 & 0  \\
\end{array}
$\\
Here two position of a and b i.e. (*, **) in two columns can be filled up $2^{2}=4$ ways. So according to the theorem 4 such predecessor of (2, 6) are (1, 7)=(001,111), (7, 1)=(111, 001), (3, 5)=(011, 101) and (5, 3)=(101, 011) shown in fig. 4.

\begin{theorem}
Let (P, Q)$\rightarrow$(M, N), If (P, Q)  and (M, N)  are two odd-odd pair (i.e. leaf node) and immediate even-even pair respectively, then the second last LSB bit of M must be always 1 i.e.  $M=a_{n}a_{n-1}...a_{3}1 0$ and $M=M(n)=4n-2$ $\forall n$ where $a_{i}\in {0, 1}$ and $n\geq1$.
\end{theorem}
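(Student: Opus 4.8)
The plan is to reduce everything to the low-order bits of the carry value $M$. First I would record what the arrow means: $(P,Q)\to(M,N)$ says that $(M,N)$ is the parent of $(P,Q)$, so by construction $M=\mathrm{CVT}(P,Q)$ and $N=\mathrm{XOR}(P,Q)$. The single fact I need about the carry transformation is its definition at the bit level: $\mathrm{CVT}(X,Y)$ is the bitwise AND of $X$ and $Y$ carried one position to the left, i.e. $M=\mathrm{CVT}(P,Q)=2\cdot\mathrm{AND}(P,Q)$. This is exactly the carry term in the identity $P+Q=\mathrm{CVT}(P,Q)+\mathrm{XOR}(P,Q)$ used throughout the paper, and it immediately turns the claim about the ``second last LSB bit of $M$'' into a statement about the least significant bit of $\mathrm{AND}(P,Q)$.

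Next I would feed in the hypothesis. Writing $P=\dots p_2 p_1$ and $Q=\dots q_2 q_1$ in binary, the statement that $(P,Q)$ is an odd-odd pair means precisely $p_1=q_1=1$. Hence the LSB of $\mathrm{AND}(P,Q)$ is $p_1\wedge q_1=1$. The left shift by one in $M=2\cdot\mathrm{AND}(P,Q)$ then forces bit position $1$ of $M$ to be $0$ (consistent with every CVT value being even) and moves this $1$ into bit position $2$ of $M$. Therefore the two lowest bits of $M$ are exactly ``$1\,0$'', i.e. $M=a_{n}a_{n-1}\dots a_{3}\,1\,0$, with the higher bits $a_3,\dots$ left unconstrained since they only record the higher AND-bits of $P$ and $Q$. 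As a sanity check, the same computation gives the LSB of $N=\mathrm{XOR}(P,Q)$ as $p_1\oplus q_1=0$, so $(M,N)$ is automatically the even-even pair asserted in the hypothesis.

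Finally I would translate the bit pattern into the closed form. Having lowest bits ``$1\,0$'' is equivalent to $M\equiv 2\pmod 4$, so the admissible values of $M$ are exactly $\{2,6,10,14,\dots\}=\{4n-2:\ n\ge 1\}$, which is the claimed parametrization $M=M(n)=4n-2$. I do not expect a genuine obstacle here, as the result is essentially a one-line consequence of the shift-by-one description of CVT; the only points that need care are (i) matching the $a_{i+1}/b_i$ bit-indexing of the surrounding theorems with this clean shift description, and (ii) reading ``$M=4n-2$ for all $n$'' as a parametrization of the possible parent values rather than as a single numerical identity.
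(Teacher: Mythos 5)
Your proposal is correct, and its core argument is the same as the paper's: both proofs rest on the observation that an odd--odd pair $(P,Q)$ has $p_1=q_1=1$, so the AND of the least significant bits is $1$, and the left shift inherent in the carry value places that $1$ in bit position $2$ of $M=\mathrm{CVT}(P,Q)$, forcing $M$ to end in the bits ``$1\,0$''. Where you diverge is in justifying the closed form $M=4n-2$. The paper handles this by ``induction,'' but its inductive step only verifies the arithmetic identity $(4n-2)+4=4(n+1)-2$; it never connects the formula to the bit pattern, so it establishes nothing beyond the base case. Your route --- observing that ending in ``$1\,0$'' is equivalent to $M\equiv 2\pmod 4$, whose solutions are exactly $\{4n-2 : n\ge 1\}$ --- is the argument the paper's second part should have been: it derives the parametrization directly from the conclusion of the first part rather than assuming the step-by-$4$ structure it is meant to prove. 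Your additional sanity check that $N=\mathrm{XOR}(P,Q)$ has LSB $0$ (so the parent really is an even--even pair) is not in the paper but is a worthwhile consistency observation. The only caution is the one you already flag: the surrounding theorems index the CVT bits as $a_{i+1}=\mathrm{AND}(x_i,y_i)$, which is precisely your shift-by-one description, so the two notations agree.
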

\begin{proof}
\textbf{\\ 1st part:} 
As binary representation of two odd numbers must leading to last LSB bit 1(for both P \& Q), therefore according to CVT definition second last LSB bit of M must be 1 as M=CVT(P, Q).\\
\textbf{2nd part:}  We will proof this part through the principle of method of induction.\\
For n=1, M=4-2=2, binary representation of 2 is 10 i.e. second last LSB bit is 1. So formula is true for n=1.\\
Let the formula be true for n i.e. $M(n)=4n-2$ we have to prove that this is true for m=n+1 i.e. $M^{\prime}(n+1)=4(n+1)-2$\\
$ \begin{array}{lcl}
M^{\prime}(n+1)&= & 4n-2+4 \\
&=& 4n+4-2 \\
&=& 4(n+1)-2\\
&=& M(n+1)
\end{array}
$
\\
So the formula is also be true for m=n+1. Therefor the formula is true for all values greater than equal 1. 
\end{proof}
Now we are ready to write the algorithm to check whether a given input node (X, Y) is a leaf node or not. Algorithm 2 is now modified using theorem 3 \& 4 by which a given node whether belong to the leaf or non-leaf can be verified easily is as follows:

\textbf{\textit{Algorithm 2:}}
Algorithm to Verify whether a given node is leaf or not

$ \begin{array}{lc}
 LEAF-OR-NON-LEAF(X,Y) \\
 \{  \\
 Let X=X_{n}x_{n-1}...X_{1} $and $  Y=Y_{n}Y_{n-1}...Y_{1}; \\$Where X+Y=N is an even number$.\\
 if(X==0) \\
 $Print \textquotedblleft (X, Y) is root node\textquotedblright $ \\
 elseif((X_{1}==1 \& Y_{1}==1) || (X_{1}==1 \& Y_{1}==1))  \\
 $Print \textquotedblleft (X, Y) is a (odd, odd) and (even, even) leaf  node $\\$respectively \textquotedblright $  \\
 else \\
 $Print \textquotedblleft (X, Y) is a non-leaf node \textquotedblright  $\\
 \}
 \end{array}
$

\begin{theorem}
Any Even integer (E) in the form of $E=2^{k}$ will be always a CVT value 2 for all possible (odd, odd) pairs (say (X, Y)) of E; where E=X+Y.
\end{theorem}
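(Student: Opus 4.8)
The plan is to reduce the claim to a statement about the bitwise AND of $X$ and $Y$. Recall (exactly as used in the proofs of Theorems 3 and 4) that the bit of $\mathrm{CVT}(X,Y)$ in position $i+1$ equals $\mathrm{AND}(x_i,y_i)$, so that $\mathrm{CVT}(X,Y)=2\,\mathrm{AND}(X,Y)$. Hence proving $\mathrm{CVT}(X,Y)=2$ is the same as proving $\mathrm{AND}(X,Y)=1$, i.e. that the only bit position in which $X$ and $Y$ simultaneously carry a $1$ is the least significant one. Since $X$ and $Y$ are both odd, their LSBs are both $1$, so $\mathrm{AND}(X,Y)$ already has a $1$ in position $1$; the entire task is to show that at every position $i\ge 2$ at most one of $X,Y$ has a $1$.

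First I would set up the column-wise (schoolbook) addition of $X+Y=E=2^{k}$, writing $r_i$ for the carry entering position $i$ (so $r_1=0$), with $s_i=x_i\oplus y_i\oplus r_i$ and $r_{i+1}=\mathrm{maj}(x_i,y_i,r_i)$. The crucial structural fact I would exploit is that $E=2^{k}$ has a single $1$ in its binary expansion, so the sum bits satisfy $s_i=0$ for every $i\neq k+1$ and $s_{k+1}=1$.

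The core of the argument is an induction on the position $i$ showing that the carry is forced to be $1$ all along the low end of the addition. At position $1$ the two odd LSBs give $s_1=1\oplus1\oplus0=0$ (consistent with $E$) and push a carry $r_2=1$. For $2\le i\le k$, the hypothesis $r_i=1$ together with $s_i=0$ forces $x_i\oplus y_i=1$, i.e. exactly one of $x_i,y_i$ is $1$; this simultaneously yields $x_i\wedge y_i=0$ and regenerates the carry $r_{i+1}=1$, closing the induction. A short separate check at position $k+1$ (using $s_{k+1}=1$, $r_{k+1}=1$, and the requirement that no carry escapes beyond $E$) forces $x_{k+1}=y_{k+1}=0$, while all higher bits vanish because $X,Y<2^{k}$. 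Collecting these, $x_i\wedge y_i=0$ for all $i\ge2$, so $\mathrm{AND}(X,Y)=1$ and therefore $\mathrm{CVT}(X,Y)=2$.

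The main obstacle I anticipate is not any single computation but keeping the induction honest: one must verify that the carry chain is genuinely \emph{forced} (rather than merely permitted) to equal $1$ at each low position, and that it terminates correctly at position $k+1$ without spawning a bit beyond $E$. A clean alternative that sidesteps the carry bookkeeping is the two's-complement identity $Y=2^{k}-X=\overline{X}+1$ (complement taken in $k$ bits): for odd $X$ this shows $X$ and $Y$ coincide only in the LSB and are complementary in every higher position, giving $\mathrm{AND}(X,Y)=1$ directly. I would keep this in reserve as a shorter route should the inductive write-up become cumbersome.
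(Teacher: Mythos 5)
Your proposal is correct, but your primary route is genuinely different from the paper's. The paper argues by construction: it writes $2^{k}=(2^{k}-1)+1$, observes that $2^{k}-1$ (all ones in binary) splits into bitwise-complementary pairs $(P,Q)$ with exactly one of $P,Q$ even, and adds $1$ to the even member to produce an (odd, odd) pair whose members agree only in the LSB and are complementary in every higher position, whence $\mathrm{CVT}=2$. That is precisely the two's-complement identity $Y=2^{k}-X=\overline{X}+1$ that you keep ``in reserve,'' so your fallback essentially reproduces the paper's proof. Your main argument --- fixing an \emph{arbitrary} (odd, odd) pair with $X+Y=2^{k}$ and running the carry chain forward, using the fact that all sum bits below position $k+1$ vanish to force $x_{i}\oplus y_{i}=1$ and hence $x_{i}\wedge y_{i}=0$ at every interior position, with the carry regenerating itself until it lands on the single $1$ of $E$ --- is a different mechanism and is arguably tighter on the quantifier: the paper shows that pairs \emph{built} by its construction have CVT equal to $2$ but is somewhat loose about why every (odd, odd) pair summing to $2^{k}$ arises from that construction, whereas your induction handles an arbitrary pair directly. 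The price is the carry bookkeeping you yourself flag (which does close correctly: exactly one of $x_{i},y_{i}$ equal to $1$ together with carry-in $1$ gives carry-out $1$); the paper's route, or your reserve identity, is shorter once one accepts that $Y$ is determined by $X$ as its $k$-bit complement plus one.
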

\begin{proof}
Notes the two facts as following:
\begin{enumerate}
\item For every pair (X, Y) of E such that if E is an n bits number then the maximum number of bit for X and Y will be n-1.
\item 2.	Every number in the form M= (2K -1) is always the addition result of two complementary numbers (say P and Q) where one of them will be even and another one is odd.
\end{enumerate}
We can represent E as:  $E= (2K -1) +1=M+1$\\
So to get E we have to add only 1 (or binary 1) to P (if P is even) or Q (if Q is even) such that P+Q=X+Y.\\
Now we get two odd numbers $p^{\prime}$ and $q^{\prime}$ such that E=M+1= $p^{\prime}+q^{\prime}$  having both LSB position are binary1 others bits position are complementary. Therefore according to the CVT definition there CVT value must be 2.
\end{proof}

Now question can be asked that what is the total number of leaf nodes and internal nodes in CVT-XOR Tree whose is root is (0, N)? 
For answering this and other similar questions we have defined three different matrices namely CVT-XOR Depth matrix, CVT-XOR Parent matrix, and CVT-XOR Frequency matrix to store depth information, parent information and frequency information respectively.

\begin{figure}[!t]
\begin{center}
\includegraphics[width=3in]{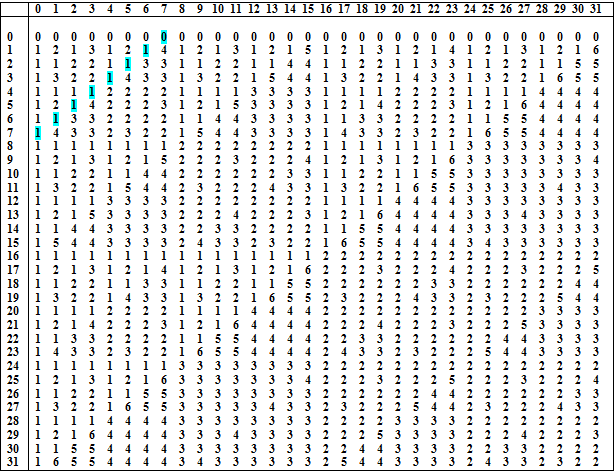}
\end{center}
\caption{CVT-XOR Depth Matrix}
\label{fig:8}
\end{figure}

\begin{figure}[!t]
\begin{center}
\includegraphics[width=3in]{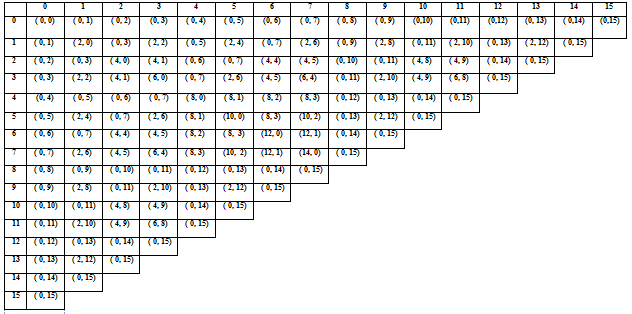}
\end{center}
\caption{CVT-XOR Parent Matrix}
\label{fig:8}
\end{figure}

\begin{figure}[!t]
\begin{center}
\includegraphics[width=3in]{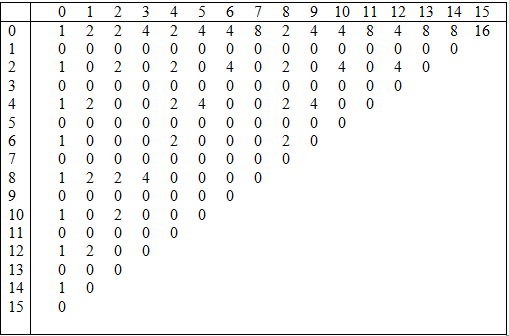}
\end{center}
\caption{CVT-XOR Frequency Matrix}
\label{fig:8}
\end{figure}

\subsection{Properties of CVT-XOR Tree obtained from the above three matrices}
\begin{enumerate}
\item The diagonal from (N, 0) to (0, N) in the CVT-XOR Depth Matrix gives the following information regarding the CVT-XOR Tree whose root is the (0, N))

\begin{enumerate}
\item	If DM (i, j)=h then (i, j) vertex is in depth-h in the CVT-XOR Tree.
\item No. of nodes in depth-h=No. of times h present on the diagonal of the CVT-XOR Depth Matrix (frequency of the occurrence of h)
\item	Average height of the Tree= (sum of all the diagonal value)/N+1
\item	If the binary pits pattern of N such that only the LSB bit is 0 i.e. number is in the form (0, 2k-2) then for such even integer N={2, 6, 14, 32,…,2k-2}), the maximum value through the diagonal from (N, 0) to (N, 0) is 2 this implies the maximum height of the Tree is always 3.
\item	If the binary pits pattern of N such all bits are 0 i.e. number is in the form (0, 2k-1) then for such odd integers N= {3, 7, 15, 31…2k-1}), the maximum value through the diagonal from (N, 0) to (N, 0) is 1 this implies the maximum height of the Tree is always 
\end{enumerate}

\item	Once we will find the number of nodes in different depth, link information ( ) among the nodes in CVT-XOR Tree can be found form the CVT-XOR Parent matrix. The diagonal from (N, 0) to (0, N) in the CVT-XOR Parent Matrix gives the following information regarding the CVT-XOR Tree whose root is the (0, N))
\begin{enumerate}
\item	If PM (i, j)=(P, Q) this implies (P, Q) is the parent of node (I, J).
\item	If (P, Q) is repeats m times in the PM matrix then there is m such similar (I, J) whose parent is (0, N)
\end{enumerate}

\item	The Frequency Matrix (FM) help us to find the following:
Let, FM (I, J) = m, where m indicate the number of predecessor or child nodes of the node (I, J).
\begin{enumerate}
\item	If I is odd then the corresponding value of m is 0, this indicate if the first co-ordinate of the node is odd and the node has no child or predecessor i.e. the node is leaf node.
\item	Through the diagonal value we can find what the leaf nodes are for a given N.
\end{enumerate}
\end{enumerate}

\subsection{CVT Operation of (odd, odd) Pairs}
Below in Fig. 8 shown the CVT value for all (odd, odd) pairs. According to the Theorem 5, CVT value for all (odd, odd) pairs in this figure contains all the numbers from the set S. Where 2 is the smallest number and the difference between two consecutive number is 4 i.e. $S=\{ 2,4, 6, 10, ...4n-2\}$. It is also showing the beautiful fractal studying in [1].
\begin{figure}[h]
\begin{center}
\includegraphics[scale=.5]{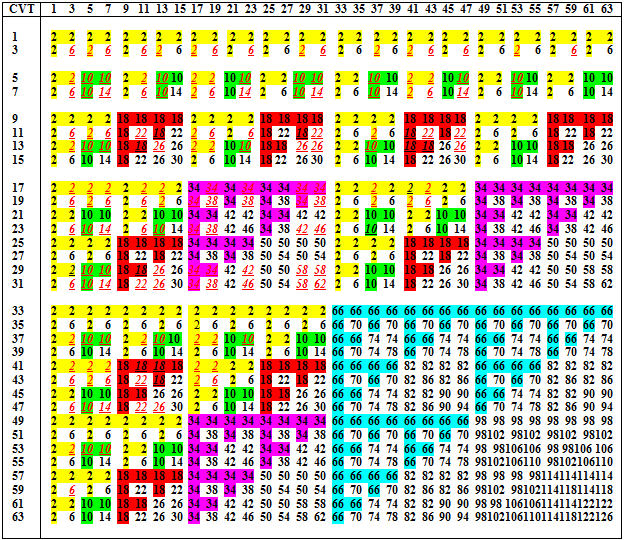}
\end{center}
\caption{CVT of all (odd, odd) pairs upto number 63}
\label{fig:8}
\end{figure}

If we traverse through a diagonal path from (K, 10) (where k is any odd number and gradually decreasing k), (k-2, 3),..., (1, k) figure (Fig. 9) like palindrome triangle is obtained. This figure demonstrate that for  any arbitrary given positive integer all the possible CVT values where bold CVT values refer to the prime pair solution.
\begin{figure}[h]
\begin{center}
\includegraphics[scale=.35]{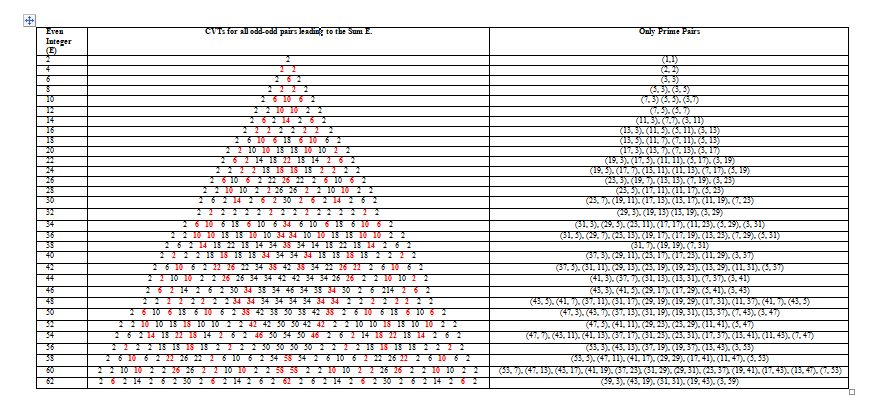}
\end{center}
\caption{Palindrome structure of immediate CVT value upto 63}
\label{fig:8}
\end{figure}

\section{Conclusion}
This paper shows a new Tree structure called CVT-XOR Tree along with two different approaches for the construction of the Tree. We also have seen different properties of the Tree like number of nodes in any depth, characterization of child, parent, internal and leaf nodes. Some other significant properties also can be obtained easily from the three different matrices along with an easy construction process of CVT-XOR Tree from the matrices. In future this work will help us in further studies in the domain of CVT-XOR to characterize the pattern formation by different CVT-XOR Trees where it is conjectured that there exists always a path from root(zero, even integer) to leaf (prime pair) signifying the Goldach Conjecture-the unfinished work remaining, how to distinguish this kind of paths from the rest. We can further explore the Tree structure by considering three numbers combination such that X+Y+Z=N and to study the different properties of such Trees.


%

\section*{Acknowledgment}
The authors would like to thank...

\ifCLASSOPTIONcaptionsoff
  \newpage
\fi

\end{document}